\newtheorem{The}{Theorem}
\newtheorem{Pro}[The]{Proposition}
\newtheorem{Def}[The]{Definition}
\newtheorem{Ex}[The]{Example}
\begin{document}
\numberwithin{equation}{section}
\numberwithin{The}{section}

\pagestyle{empty}
\begin{center}{\Large{\bf Measures of extremality and a rank test for functional data}}
\end{center}
\begin{center}
\vspace{5mm}
{\bf{Alba M. Franco-Pereira *}}  \\
Department of Statistics and OR\\
Universidad Complutense de Madrid\\
albfranc@ucm.es

\vspace{5mm}

{\bf{Rosa E. Lillo}}\\
Department of Statistics\\
Universidad Carlos III de Madrid\\
rosaelvira.lillo@uc3m.es

\vspace{5mm}

{\bf{Juan Romo}}\\
Department of Statistics\\
Universidad Carlos III de Madrid\\
juan.romo@uc3m.es

\rule{5cm}{.5mm}
\end{center}

\vspace{1cm}

\noindent {\bf\Large Abstract}\\

The statistical analysis of functional data is a growing need in
many research areas. In particular, a robust methodology is
important to study curves, which are the output of experiments in
applied statistics. In this paper we study some new
definitions which reflect the ``extremality'' of a curve with respect to a
collection of functions, and provide natural orderings
for sample curves. Their finite dimensional versions are computationally feasible and useful for studying high dimensional observations. Thus, these extreme measures are suitable for complex observations such as microarray data and images. We show the applicability of these measures designing a rank test for functional data. This functional rank test shows different growth patterns for boys and girls when it is applied to children growth data.

\medskip

\noindent {\bf Key words:} Data depth, extreme measures, functional data, order statistics, rank test for functions.\\

\newpage
\section{Introduction}\label{S1}
The data output sophistication in different research fields requires to advance in the statistical
analysis of complex data. In functional data analysis, each observation is a real function
$x_i(t)$, $i = 1, ..., n$, $t \in I$, where $I$ is an interval in $\mathbb{R}$. There are several reasons that make necessary
the study of functional data. In many research areas (medicine, biology, economics,
engineering), the data generating process is naturally a stochastic function. Moreover, many
problems are better approached if the data are considered as functions. For instance, if each
curve is observed at different points, a multivariate analysis would not be valid, and it is
therefore necessary to smooth the data and treat them as continuous functions defined in a
common interval. For those reasons, the analysis of functional data is one of the topics that, within
the field of statistics, is receiving a steady increasing
attention in recent years.

\medskip
Multivariate techniques such as principal components, analysis of variance and regression
methods have already been extended to a functional context; see Ramsay and Silverman (2005). A fundamental task in functional data analysis is to provide an ordering within a
sample of curves that allows the definition of order statistics such as ranks and L-statistics.
An important tool to analyze these functional data aspects is the idea of statistical depth. This concept was first introduced in the multivariate context to measure the
`centrality' or the `outlyingness' of a $d$-dimensional observation with respect
to a given dataset or a population distribution and to
generalize order statistics, ranks, and medians to higher
dimensions. Several depth definitions for multivariate data have
been proposed and analyzed by Mahalanobis (1936), Tukey (1975),
Oja (1983), Liu (1990), Singh (1991), Fraiman and Meloche (1999),
Vardi and Zhang (2000), Koshevoy and Mosler (1997) and Zuo (2003), among others. However, direct generalization of current multivariate depths to functional
data often leads to either depths that are computationally
intractable or depths that do not take into account some natural
properties of the functions, such as shape. For that reason,
several specific definitions of depth for functional data have been
introduced. See for example, Vardi and Zhang (2000), Fraiman and
Muniz (2001), Cuevas, Febrero and
Fraiman (2007), Cuesta-Albertos and Nieto-Reyes (2008), L{\'o}pez-Pintado and Romo (2009) and L{\'o}pez-Pintado and Romo (2011). The definition
of depth for curves provides us with a criteria to order the
sample curves from the center-outward (from the deepest to the
most extreme).

\medskip
In many applications, however, an important problem is to measure the `extremality' or the `outlyingness' of a curve within a set of curves, instead of its centrality. This is the case, for example, when one wishes to identify outliers within a set of sample functions. Laniado, Lillo and Romo (2010) introduced this
concept of extremality to measure the ``farness'' of a
multivariate point with respect to a data cloud or to a
distribution. And, more recently, Franco-Pereira, Lillo and Romo (2011) extended their idea to functional data introducing two
definitions of extremality which are based on the notion of `half graph' of a curve: the \emph{hyperextremality} and the \emph{hypoextremality}. In this paper we study these two extreme measures in detail and propose and application designing a rank test for functional data.

\medskip
We would like to point out here that the statistical depth, since it is a measure of centrality, it also provides a notion of farness from the center. However, the concept of depth as a measure of extremality have some shortcomings that will be explained later on the paper.

\medskip
The paper is organized as follows. We recall the concepts of \emph{hyperextremality} and \emph{hypoextremality} in Section~\ref{S2}. In
Section~\ref{S3} their corresponding finite-dimensional
versions are introduced and in Section~\ref{S4} various properties of them are derived. In Section~\ref{S5} we define
the generalized versions of \emph{hyperextremality} and \emph{hypoextremality}. In Section~\ref{S6} we explain how the extreme measures can be an alternative to depth measures through some real data examples. Finally, in Section~\ref{S7} we design a rank test for functional data based on extremality and apply it to real data sets.

\vspace{1cm}

\section{Two measures of extremality for functional data}\label{S2}

First we recall the definitions of hypograph and hypergraph. Let $C(I)$ be the space of continuous functions defined on
a compact interval $I$. Consider a stochastic process $X$ with
sample paths in $C(I)$ and distribution $F_X$. Let
$x_{1}(t),\ldots,x_{n}(t)$ be a sample of curves from $F_X$. The graph
of a function $x$ in $C(I)$ is the subset of the plane $G(x)=\{(t,x(t)): t
\in I\}$, and the hypograph ($hg$) and the hypergraph ($Hg$) of $x$
are given by
\[
\begin{array}{ll}
hg(x)= & \{(t,y) \in I \times \mathbb{R}: y \leq x(t) \}, \\
Hg(x)= & \{(t,y) \in I \times \mathbb{R}: y \geq x(t) \}.
\end{array}
\]
A natural way of establishing the extremality of a curve consists of checking the proportion of functions of the sample whose graph is in the hypograph of $x$ or in its hypergraph. Based on this idea, we introduce the two following concepts
that measure the extremality of a curve within a set of curves.
\begin{Def}
The \textbf{hyperextremality} of $x$ with respect to a set of functions
$x_{1}(t),\ldots,x_{n}(t)$ is defined as
\begin{equation*}
HEM_{n}(x)=1-\frac{\sum_{i=1}^{n}I_{\{G(x_{i})\subset
hg(x)\}}}{n}=1-\frac{\sum_{i=1}^{n}I_{\{x_{i}(t)\leq x(t), t\in I
\}}}{n}.
\end{equation*}
\end{Def}
Hence, the hyperextremality of $x$ is one minus the proportion
of functions in the sample whose graph is in the hypograph of $x$; that is, one minus the proportion
of curves in the sample below $x$. As a consequence, a curve $x$ has low hyperextremality with respect to a set of curves if many curves of the sample are below $x$.

\medskip
The population version of $HEM_{n}(x)$ is
\begin{equation*}
HEM(x,F_X)\equiv HEM(x) =1-P(G(X)\subset hg(x))=1-P(X(t)\leq x(t), t\in I).
\end{equation*}

Analogously, we can define the hypoextremality of $x$ as one minus the proportion
of functions in the sample whose graph is in the hypergraph of
$x$; that is, one minus the proportion
of curves in the sample above $x$.
\begin{Def}
The \textbf{hypoextremality} of $x$ with respect to a set of functions
$x_{1}(t),\ldots,x_{n}(t)$ is defined as
\begin{equation*}
hEM_{n}(x)=1-\frac{\sum_{i=1}^{n}I_{\{G(x_{i})\subset
Hg(x)\}}}{n}=1-\frac{\sum_{i=1}^{n}I_{\{x_{i}(t)\geq x(t), t\in I
\}}}{n}.
\end{equation*}
\end{Def}
And its population version can be given as
\begin{equation*}
hEM(x,F_X) \equiv hEM(x)=1-P(G(X)\subset Hg(x))=1-P(X(t)\geq x(t), t\in I).
\end{equation*}

It is straightforward to check that, given a curve $x$, the
larger the hyperextremality or the hypoextremality of $x$
is, the more extreme is the curve $x$. Therefore, both concepts
measure the extremality of the curves, but from a different
perspective.

\medskip
Figure~\ref{fig0} shows the hypergraph and the hypergraph of the curve $x$ within a set of five curves in the interval $I=[0,0.6]$. Immediately from the plots it is easy to compute the hyperextremality and the hypoextremality of $x$:

\begin{figure}
\centering
\[
\begin{array}{cc}
\includegraphics[height=8cm,width=8cm]{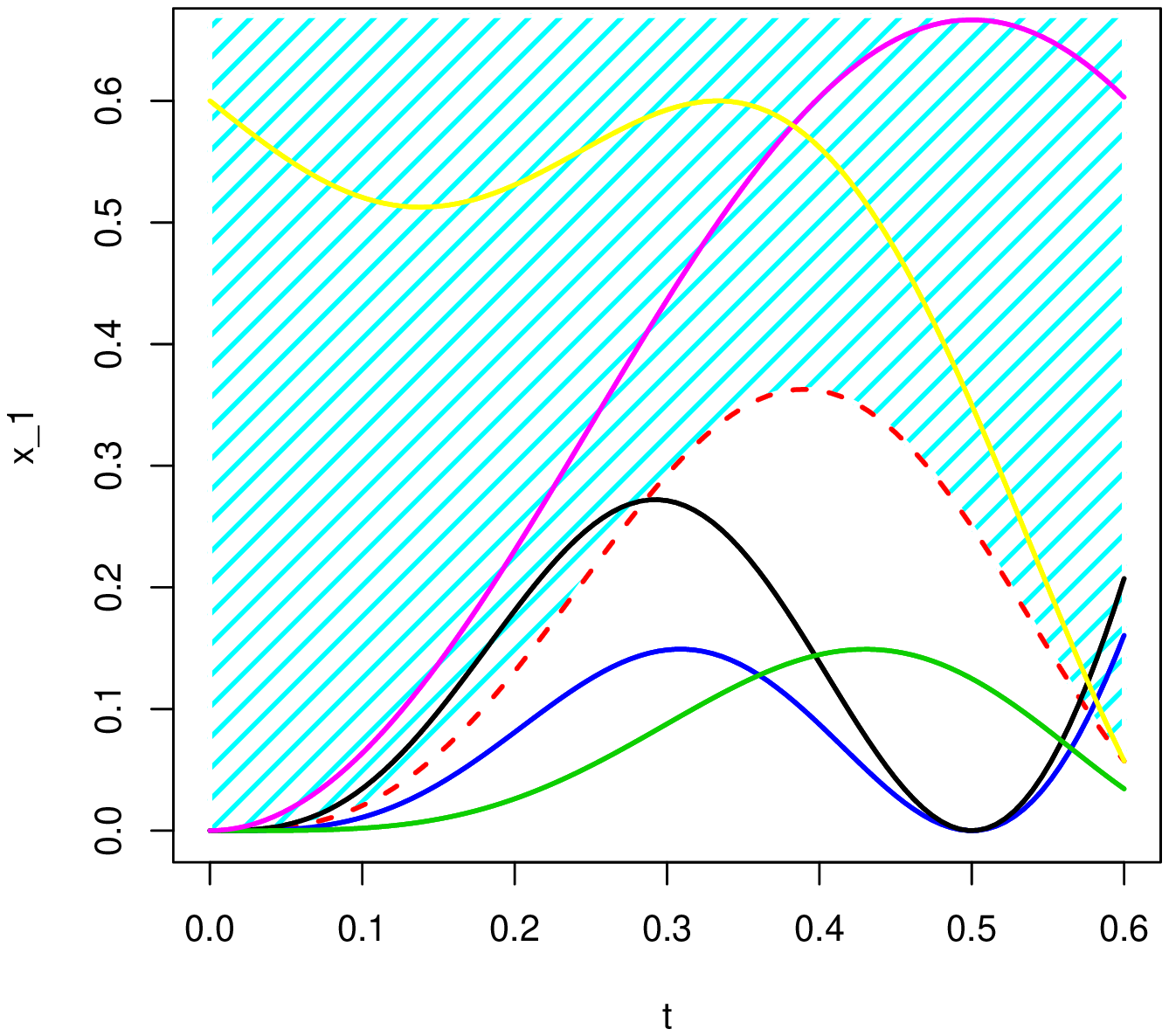} & \includegraphics[height=8cm,width=8cm]{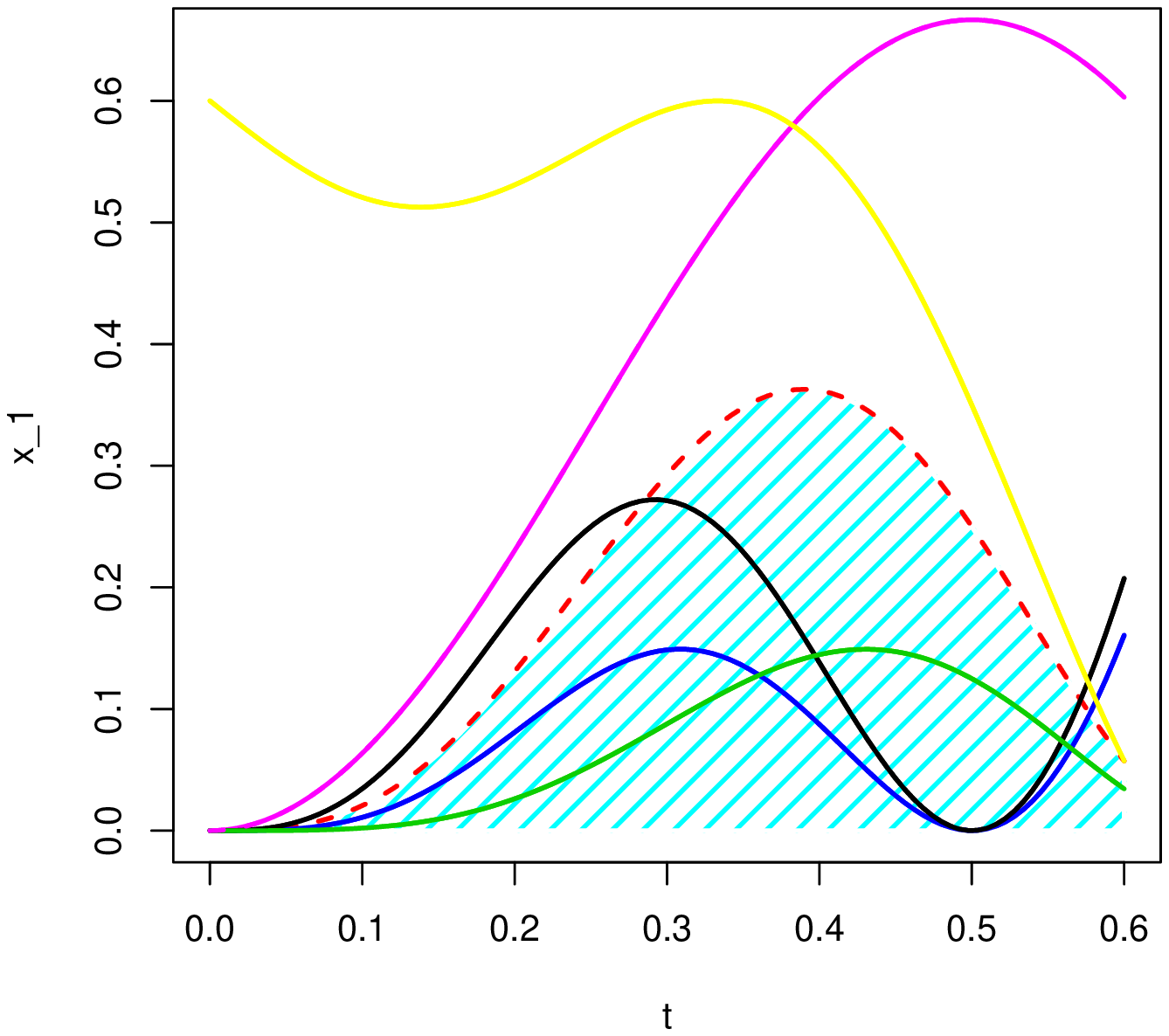}
\end{array}
\]
\caption{Hypergraph (left) and hypograph (right) of the dashed curve $x$ with respect to the set of curves}
\label{fig0}
\end{figure}

\begin{equation*}
HEM_5(x)=1-\frac{1}{5}=\frac{4}{5}=0.8,
\end{equation*}
and
\begin{equation*}
hEM_5(x)=1-\frac{2}{5}=\frac{3}{5}=0.6.
\end{equation*}

\vspace{1cm}

\section{Finite-dimensional versions}\label{S3}

The concepts of hypograph and hypergraph introduced in the
previous section can be adapted to finite-dimensional data. Consider each
point in $\mathbb{R}^{d}$ as a real function defined on the set of
indexes $\{1, ..., d\}$, the hypograph and hypergraph of a point
$x = (x(1), x(2), ..., x(d))$
can be expressed, respectively, as
\[
\begin{array}{ll}
hg(x)= & \{(k,y) \in \{1, ..., d\} \times \mathbb{R}: y \leq x(k) \}, \\
Hg(x)= & \{(k,y) \in \{1, ..., d\} \times \mathbb{R}: y
\geq x(k) \}.
\end{array}
\]

\medskip
Let $X$ be a $d$-dimensional random vector with
distribution function $F_X$. Let $X \leq
x$ and $X \geq x$ be the abbreviations
for $\{X(k) \leq x(k), k = 1, ..., d\}$ and
$\{X(k) \geq x(k), k = 1, ..., d\}$,
respectively. If we particularize the hyperextremality and the hypoextremality to the
finite-dimensional case, we obtain the following definitions:
\begin{equation*}
HEM(x,F_X)=1-P(X\leq
x)=1-F_X(x),
\end{equation*}
and
\begin{equation*}
hEM(x,F_X)=1-P(X\geq
x)=1-F_{-X}(-x)=1-F_{Y}(y),
\end{equation*}
where $Y=-X$ and $y=-x$; that is, the hyperextremality (hypoextremality) of a
$d$-dimensional point $x$ indicates the probability that a point
is componentwise greater (smaller) that $x$.

\medskip
Let $x_1, \ldots, x_n$ be a random sample from
$X$, the sample version of these extreme measures are given by
\begin{equation}\label{eq1}
HEM_{n}(x)=1-\frac{\sum_{i=1}^{n}I_{\{x_{i}\leq
x\}}}{n}=1-F_{X,n}(x),
\end{equation}
and
\begin{equation}\label{eq2}
hEM_{n}(x)=1-\frac{\sum_{i=1}^{n}I_{\{x_{i}\geq
x\}}}{n}=1-F_{Y,n}(y),
\end{equation}
where $F_{X,n}$ and $F_{Y,n}$ stand for the empirical distribution functions of $x_1,\ldots,x_n$ and $-x_1,\ldots,-x_n$, respectively.

\begin{Ex}
Let $x_1=(2,1,1)$, $x_2=(4,3,2)$ and $x_3=(6,5,5) \in \mathbb{R}^3$. Figure~\ref{fig1} illustrates the parallel coordinates (see Inselberg, 1985) of $x_0=(4.5,2,4)$ with respect to these three points. Hence, we can compute
\[
HEM_3(x_0)=1-\frac{5}{9}=\frac{4}{9},
\]
and
\[
hEM_3(x_0)=1-\frac{4}{9}=\frac{5}{9}.
\]
\end{Ex}

\begin{figure}
\centering
\includegraphics[height=8cm,width=8cm]{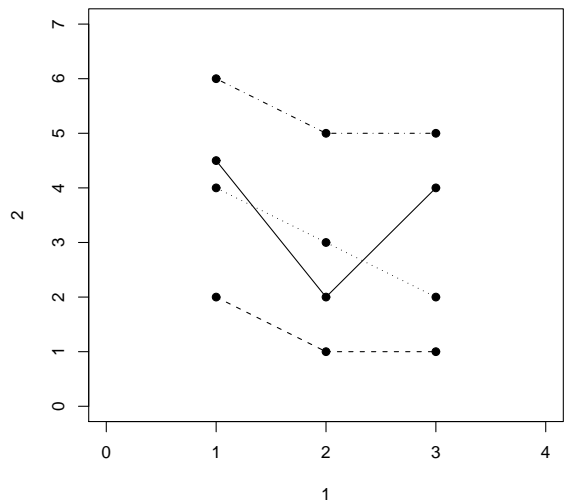}
\caption{Parallel coordinates of $x_0=(4.5,2,4)$ (solid) and $x_1=(2,1,1)$ (dashed), $x_2=(4,3,2)$ (dotted) and $x_3=(6,5,5)$ (dotdash) considered in order to compute the HEM and the hEM of $x_0$}
\label{fig1}
\end{figure}

Now, recall the definition of \textbf{oriented sub-orthant} from Laniado, Lillo and Romo (2010):
\begin{Def}
Given a unit vector $u \in \mathbb{R}^{n}$ and a vertex $x \in \mathbb{R}^{n}$, an oriented sub-orthant $C_x^{u}$ is the convex cone given by
\[
C_x^{u}=\{ z \in \mathbb{R}^{n}:Q_1Q_2'(z-x)\geq \theta\},
\]
where $\theta$ is the zero vector in $\mathbb{R}^n$ and $Q_1$ and $Q_2$ verify
\[
e=Q_1R_1 \quad \textup{and} \quad u=Q_2R_2,
\]
with $e=\frac{1}{\sqrt{n}}(1,...,1)' \in \mathbb{R}^{n}$ and $R_1=R_2=(1,0,...,0)' \in \mathbb{R}^{n}$.
\end{Def}

We can think of $C_x^{u}$ being the convex cone with vertex $x$
obtained by moving the nonnegative orthant and translating the
origin to $x$. It is easy to see that the population finite dimensional version of the
hyperextremality (hypoextremality) can be also seen as the probability that the
vector $x$ belongs to $C_x^{u}$ if
$u=\frac{1}{\sqrt{2}}(1,1)'$ ($u=\frac{1}{\sqrt{2}}(-1,-1)'$). Therefore, the hyperextremality and the hypoextremality coincide
with the extreme measure for multivariate data introduced by
Laniado, Lillo and Romo (2010), which is computationally feasible
and useful for studying high dimensional observations. 

\medskip
The hyperextremality and the hypoextremality in the finite dimensional case are invariant with respect to translation and some types of dilatations. Let $A$ be a positive (or negative) definite diagonal matrix and $b \in \mathbb{R}^{d}$, then
\[
EM(Ax+b,F_{Ax+b})=EM(x,F_X).
\]

In the following propositions we establish some other properties of these notions of extremality. The first one states that the hyperextremality (hypoextremality) increases to one when the norm of the point tends to infinity and in the second one we prove the uniform convergence of $HEM_{n}$ and $hEM_{n}$ to their corresponding population versions.
\begin{Pro}\label{luego}
Let $x \in \mathbb{R}^{d}$,
\[
\sup_{||x|| \geq M}EM(x)\rightarrow 1, \quad \textup{ when } M\rightarrow \infty,
\]
where $EM= HEM$ or $EM= hEM$.
\end{Pro}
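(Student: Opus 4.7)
The plan is to use a squeeze argument: since both $HEM$ and $hEM$ are bounded above by $1$ by construction, it suffices to exhibit, for each $M$, a specific point $x_M$ with $\|x_M\| \geq M$ whose extremality can be forced arbitrarily close to $1$ as $M \to \infty$.

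First I would treat the $HEM$ case. Here $HEM(x) = 1 - P(X \leq x) = 1 - P\bigl(X(1) \leq x(1), \ldots, X(d) \leq x(d)\bigr)$, so making $HEM(x)$ close to $1$ amounts to making this joint probability close to $0$. The key observation is the trivial bound
\[
P(X \leq x) \leq P(X(k) \leq x(k))
\]
for any coordinate $k$. Thus I would choose $x_M = (-M, 0, 0, \ldots, 0) \in \mathbb{R}^d$, which satisfies $\|x_M\| = M$, and use that the marginal distribution function of $X(1)$ satisfies $P(X(1) \leq -M) \to 0$ as $M \to \infty$. This gives
\[
\sup_{\|x\| \geq M} HEM(x) \geq HEM(x_M) \geq 1 - P(X(1) \leq -M) \longrightarrow 1.
\]
Combined with $HEM \leq 1$, the supremum converges to $1$.

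For $hEM$ the argument is symmetric: take $x_M = (M, 0, \ldots, 0)$ and use $P(X \geq x_M) \leq P(X(1) \geq M) \to 0$ as $M \to \infty$, which yields $\sup_{\|x\| \geq M} hEM(x) \to 1$ by the same squeeze.

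There is essentially no obstacle here; the statement reduces to the fact that the marginal distribution functions of $X$ tend to $0$ at $-\infty$ (and to $1$ at $+\infty$). The only mild care required is to ensure that the witness $x_M$ actually lies in the set $\{\|x\| \geq M\}$, which is why picking $x_M$ on a coordinate axis with a single large coordinate (rather than, say, scaling a fixed direction) is the cleanest choice.
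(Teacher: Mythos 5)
Your proposal is correct for the statement as written: since $EM\leq 1$ always, it suffices to exhibit a witness $x_M$ with $\|x_M\|=M$ and $EM(x_M)\to 1$, and your choices $x_M=(-M,0,\ldots,0)$ for $HEM$ and $x_M=(M,0,\ldots,0)$ for $hEM$ do exactly that, using only the monotone inclusion $\{X\leq x_M\}\subseteq\{X(1)\leq -M\}$ (resp.\ $\{X\geq x_M\}\subseteq\{X(1)\geq M\}$) and the fact that a univariate marginal distribution function vanishes at $-\infty$ and tends to one at $+\infty$. The paper takes a different route: it does not prove the finite-dimensional case directly, but derives it as a particular case of the functional result in Section~\ref{S4}, where the region $\{\|x\|_\infty\geq M\}$ is decomposed according to whether the sup norm is attained by $\sup_t x(t)$ or by $\sup_t(-x(t))$, and on each piece the relevant extremality is bounded below by $1-P(\|X\|_\infty\geq M)$, which tends to one by the tightness assumption on the process. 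Your argument is more elementary: in $\mathbb{R}^d$ tightness is automatic, a single coordinate tail suffices, and you avoid the slips in the paper's displayed manipulations (for instance, $\sup(1-P)=1-\inf P$, not $1-\sup P$). What the paper's decomposition is geared toward is a stronger and more informative fact, namely that \emph{every} far-away point, not just a well-chosen witness, is extreme in at least one of the two senses (far ``up'' forces $hEM$ near one, far ``down'' forces $HEM$ near one), which is closer to the verbal claim that extremality ``increases to one when the norm tends to infinity''; your witness construction establishes the literal supremum statement but not that uniform version.
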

The proof of Proposition~\ref{luego} is postponed to the next section, since it is a particular case of the same property in the functional case.

\begin{Pro}
$EM(\cdot)$ is uniformly consistent:
\[
\sup_{x \in \mathbb{R}^{d}} |EM_{n}(x)-EM(x)|\rightarrow^{a.s.}0, \quad \textup{ when } n\rightarrow \infty,
\]
where $EM= HEM$ or $EM= hEM$.
\end{Pro}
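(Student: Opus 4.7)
The plan is to reduce the statement directly to the multivariate Glivenko--Cantelli theorem. From (\ref{eq1}) and (\ref{eq2}),
\begin{equation*}
\sup_{x \in \mathbb{R}^d} |HEM_n(x) - HEM(x)| = \sup_{x \in \mathbb{R}^d} |F_{X,n}(x) - F_X(x)|,
\end{equation*}
and analogously $\sup_x |hEM_n(x) - hEM(x)| = \sup_y |F_{Y,n}(y) - F_Y(y)|$ with $Y = -X$. Hence both cases follow from the almost sure uniform convergence of the $d$-dimensional empirical distribution function to its population counterpart.

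I would prove that in two steps. First, for each fixed $x \in \mathbb{R}^d$, the strong law of large numbers applied to the bounded i.i.d.\ indicators $I_{\{x_i \leq x\}}$ gives $F_{X,n}(x) \to F_X(x)$ almost surely. Second, I would upgrade pointwise convergence to uniform convergence via a grid discretization. Given $\epsilon > 0$, for each coordinate $k = 1, \ldots, d$ choose marginal grid points $-\infty = t_{k,0} < t_{k,1} < \cdots < t_{k,m} = +\infty$ so that every slab $\{z \in \mathbb{R}^d : z(k) \in (t_{k,j-1}, t_{k,j}]\}$ has $F_X$-mass at most $\epsilon/d$. The resulting product grid has finitely many nodes, and any $x$ is pinched coordinate-wise between the corners $x^-, x^+$ of the unique cell containing it. Coordinate-wise monotonicity of $F_{X,n}$ and $F_X$, together with a union bound over the $d$ slabs that yields $F_X(x^+) - F_X(x^-) \leq \epsilon$, then gives
\begin{equation*}
|F_{X,n}(x) - F_X(x)| \leq \max_{g} |F_{X,n}(g) - F_X(g)| + \epsilon,
\end{equation*}
where $g$ ranges over the finite grid. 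Applying the pointwise step to each of the finitely many grid points, letting $n \to \infty$, and then $\epsilon \to 0$ completes the argument. The case $EM = hEM$ then follows by applying the same argument to the transformed sample $-x_1, \ldots, -x_n$ drawn from $F_Y$.

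The main obstacle is the pointwise-to-uniform passage. In one dimension the full monotonicity of the CDF makes the analogous argument immediate, but in $\mathbb{R}^d$ monotonicity is only coordinate-wise, so the cell mass $F_X(x^+) - F_X(x^-)$ cannot be read off a single marginal increment and must be controlled via the slab decomposition above. Once that bound is in hand, the remainder of the proof is routine.
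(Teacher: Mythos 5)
Your reduction is exactly the paper's: write $HEM_n(x)=1-F_{X,n}(x)$, $HEM(x)=1-F_X(x)$, so the supremum of $|HEM_n-HEM|$ equals the supremum of $|F_{X,n}-F_X|$, and handle $hEM$ by passing to $Y=-X$; the paper stops there and simply invokes the Glivenko--Cantelli theorem in $\mathbb{R}^d$, whereas you go on to prove that theorem via SLLN at grid points plus a slab discretization. That extra work is fine in spirit, but one step of it is stated too strongly: you cannot in general ``choose marginal grid points so that every slab $\{z: z(k)\in(t_{k,j-1},t_{k,j}]\}$ has $F_X$-mass at most $\epsilon/d$,'' because a marginal atom of mass greater than $\epsilon/d$ makes this impossible. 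The standard repair is to include such atoms as grid points, treat the singletons $\{t_{k,j}\}$ as their own cells, and pinch $x$ using open cells together with left limits $F_X(\cdot,t_{k,j}^-,\cdot)$; the SLLN then has to be applied at the finitely many grid events defined with both weak and strict inequalities. With that adjustment your self-contained argument is complete; as a review of the paper's proposition, your proposal matches the paper's proof, just with the cited Glivenko--Cantelli step expanded (and needing the atom caveat above).
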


\begin{proof}
Applying Glivenko-Cantelli's theorem in $\mathbb{R}^d$, we have that
\[
\sup_{x \in \mathbb{R}^{d}} |F_{X,n}(x)-F_X(x)|\rightarrow^{a.s.}0, \quad \textup{ when } n\rightarrow \infty.
\]
Therefore,
\[
\sup_{x \in \mathbb{R}^{d}} |HEM_n(x)-HEM(x)|=\sup_{x \in \mathbb{R}^{d}} |1-F_{X,n}(x)-(1-F_X(x))|=
\]
\[
=\sup_{x \in \mathbb{R}^{d}} |F_X(x)-F_{X,n}(x)|\rightarrow^{a.s.}0, \quad \textup{ when } n\rightarrow \infty.
\]
The proof for $hEM_n$ is analogous.
\end{proof}

\section{Properties of the functional extremality measures}\label{S4}
Here we extend some of the properties established in the previous section to the functional version of extremality. Let $x_1,\ldots,x_n$ be independent copies of a stochastic process $X$ in $C(I)$ with distribution function $F_X$. Assume that the stochastic process $X$ is tight, i.e.,
\[
P(||X||_{\infty}\geq M)\rightarrow^{a.s.}0, \quad \textup{ when } M\rightarrow \infty.
\]

The two extremality measures defined in Section~\ref{S2} verify a linear invariance property. Consider $a$ and $b$ functions in $C(I)$, where $a(t)>0$ or $a(t)<0$ for every $t \in I$. Then,
\[
EM(x,F_X)=EM(ax+b,F_{aX+b}),
\]
where $EM= HEM$ or $EM= hEM$.

The hyperextremality and hypoextremality of a function converge to one when its norm tends to infinity.
\begin{Pro}
The extremality measures verify that
\[
\sup_{||x||_{\infty} \geq M} EM(x,F_{X})\rightarrow 1, \quad \textup{ when } M\rightarrow \infty,
\]
and
\[
\sup_{||x||_{\infty} \geq M} EM_n(x,F_{n,X})\rightarrow^{a.s.} 1, \quad \textup{ when } M\rightarrow \infty,
\]
where $EM= HEM$ and $EM_n= HEM_n$, or $EM= hEM$ and $EM_n= hEM_n$.
\end{Pro}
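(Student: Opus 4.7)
My plan is to force the supremum to $1$ by evaluating $EM$ at an explicit extremal test function. For $HEM$ I take $x_M\equiv -M$ on $I$; for $hEM$ I take $x_M\equiv +M$. Both lie in $C(I)$ and satisfy $\|x_M\|_\infty = M$, so they sit in the index set of the supremum and it suffices to prove $EM(x_M)\to 1$ as $M\to\infty$.

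For the population claim, the identity
\[
HEM(x_M) \,=\, 1 - P\bigl(X(t)\leq -M \text{ for every } t\in I\bigr)
\]
together with the inclusion $\{X(t)\leq -M\ \forall t\in I\}\subset\{\|X\|_\infty\geq M\}$ reduces everything to the tightness hypothesis, which gives $P(\|X\|_\infty\geq M)\to 0$. Since trivially $HEM\leq 1$, this squeezes $\sup_{\|x\|_\infty\geq M}HEM(x)$ to $1$. The $hEM$ version is identical after swapping $-M$ and $\leq$ for $+M$ and $\geq$.

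For the empirical claim I fix a sample realization $\omega$ and set $R_n(\omega):=\max_{1\le i\le n}\|x_i(\cdot,\omega)\|_\infty$, which is finite because each $x_i(\cdot,\omega)\in C(I)$ is bounded on the compact $I$. Whenever $M>R_n(\omega)$, no sample path can satisfy $x_i\leq -M$ throughout $I$ (that would force $\|x_i\|_\infty\geq M>R_n(\omega)$), so every indicator in the sum defining $HEM_n(x_M)$ vanishes and $HEM_n(x_M,\omega)=1$. Therefore $\sup_{\|x\|_\infty\geq M}HEM_n(x,\omega)=1$ for all $M$ beyond $R_n(\omega)$, yielding sure (and in particular almost sure) convergence; the $hEM_n$ case is symmetric.

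I do not anticipate any serious obstacle: the tightness assumption was written into the problem precisely to close the population step, and the empirical step reduces to the deterministic observation that finitely many continuous functions on a compact interval are uniformly bounded. The only care required is to verify that the chosen test functions are continuous and have sup-norm exactly $M$, which is transparent for constants.
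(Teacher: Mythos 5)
Your proof is correct for the proposition as stated, but it follows a genuinely different route from the paper. The paper does not work with witness functions: it splits the region $\{\|x\|_\infty\geq M\}$ according to whether the sup-norm is attained by $x$ or by $-x$, and on each piece derives a lower bound of the form $1-P(\|X\|_\infty\geq M)$ that holds \emph{uniformly} over the piece, invoking tightness again (via a domination argument for the indicators $I_{\{\|X_i\|_\infty\geq\|x\|_\infty\}}$) to handle the sample version; this aims at the stronger assertion that every curve lying far out in the appropriate direction is near-maximally extreme, not merely that some such curve exists. You instead observe that, since $EM\leq 1$, it suffices to evaluate at the single constant $x_M\equiv\mp M$, reduce to the inclusion $\{X\leq -M\ \forall t\}\subset\{\|X\|_\infty\geq M\}$ and tightness for the population claim, and note for the empirical claim that the $n$ continuous sample paths are uniformly bounded so that $EM_n(x_M)=1$ once $M$ exceeds $\max_i\|x_i\|_\infty$ --- so your empirical half does not even use tightness. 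What each approach buys: yours is shorter, avoids the delicate interchange of suprema in the paper's displays (where $\sup(1-P)$ is rewritten as $1-\sup P$ rather than $1-\inf P$), and settles the supremum statement exactly as written, since the supremum is sandwiched between $EM(x_M)$ and $1$; the paper's decomposition, if carried out carefully, yields the stronger uniform conclusion on each directional piece, which is closer to the informal claim that extremality ``increases to one'' as the norm grows, something your witness argument deliberately does not address.
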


\begin{proof}
The quantity $\sup_{||x||_{\infty} \geq M} EM(x,F_{X})$ can be decomposed depending on where the supremum is achieved in the following way:
\[
\sup_{||x||_{\infty} \geq M} EM(x,F_{X}) \leq \sup_{||x||_{\infty} = M \bigcap ||x||_{\infty}=\sup x(t)} EM(x,F_{X}) +
\]
\[
+ \sup_{||x||_{\infty} \geq M \bigcap ||x||_{\infty}=\sup -x(t)} EM(x,F_{X}).
\]
Now,
\[
\sup_{||x||_{\infty} \geq M \bigcap ||x||_{\infty}=\sup x(t)} EM(x,F_{X}) = \sup_{||x||_{\infty} \geq M \bigcap ||x||_{\infty}=\sup x(t)} (1-P(X(t)\geq x(t))) =
\]
\[
=1- \sup_{||x||_{\infty} \geq M \bigcap ||x||_{\infty}=\sup x(t)} P(X(t)\geq x(t)) \geq 1-\sup_{||x||_{\infty} \geq M \bigcap ||x||_{\infty}=\sup x(t)} P(||X||_{\infty}\geq ||x(t)||_{\infty}) \geq
\]
\[
\geq 1- P(||X||_{\infty}\geq M)\rightarrow 1, \quad \textup{ when } M\rightarrow \infty.
\]
And also,
\[
\sup_{||x||_{\infty} \geq M \bigcap ||x||_{\infty}=\sup -x(t)} EM(x,F_{X}) = \sup_{||x||_{\infty} \geq M \bigcap ||x||_{\infty}=\sup -x(t)} (1-P(X(t)\leq x(t)))=
\]
\[
=1- \sup_{||x||_{\infty} \geq M \bigcap ||x||_{\infty}=\sup -x(t)} P(X(t)\leq x(t)) \geq 1-\sup_{||x||_{\infty} \geq M \bigcap ||x||_{\infty}=\sup -x(t)} P(-X(t) \leq -x(t)) \geq
\]
\[
\geq 1-\sup_{||x||_{\infty} \geq M \bigcap ||x||_{\infty}=\sup -x(t)} P(||-X(t)||_{\infty} \leq ||x(t))||_{\infty}
\geq 1- P(||-X||_{\infty}\geq M)\rightarrow 1, \quad \textup{ when } M\rightarrow \infty.
\]

\medskip
To prove that $EM_n(x,F_{n,X})$ converges almost surely to one we can use the same decomposition as before. Hence, here we just present a sketch of the proof. If $||x||_{\infty=\sup x(t)}$,
\[
\sup_{||x||_{\infty} \geq M \bigcap ||x||_{\infty}=\sup x(t)} EM(x,F_{n,X}) \geq \sup_{||x||_{\infty} \geq M \bigcap ||x||_{\infty}=\sup x(t)} (1-\frac{1}{n} \sum_{i=1}^{n}I_{\{X_i(t)\geq x(t), t \in I\}}) =
\]
\[
=1- \sup_{||x||_{\infty} \geq M \bigcap ||x||_{\infty}=\sup x(t)} \frac{1}{n} \sum_{i=1}^{n}I_{\{X_i(t)\geq x(t), t \in I\}} \geq 1-\sup_{||x||_{\infty} \geq M \bigcap ||x||_{\infty}=\sup x(t)}
\frac{1}{n} \sum_{i=1}^{n}I_{\{||X_i||_{\infty} \geq ||x||_{\infty}\}} \geq
\]
\[
\geq 1-\frac{1}{n} \sum_{i=1}^{n} \sup_{||x||_{\infty} \geq M}  I_{\{||X_i||_{\infty} \geq ||x||_{\infty}\}}.
\]

\medskip
In what follows we show that $X_M=\sup_{||x||_{\infty}\geq M} I_{\{||X_i||_{\infty} \geq ||x||_{\infty}\}}$ converges almost surely to $0$ when $M$ tends to infinity. Define $Y_M=I_{\{||X_i||_{\infty}\geq M\}}$, since
\[
0\leq X_M \leq Y_M,
\]
it is sufficient to prove that $Y_{M}\rightarrow^{a.s.} 0$, or equivalently that
\[
P(\sup_{M \geq l} I_{\{||X_i||_{\infty} \geq M\}} > \varepsilon) \rightarrow 0, \quad \textup{ when } l\rightarrow \infty.
\]
It is easy to see that the following inequality holds,
\[
\sup_{M \geq l} I_{\{||X_i||_{\infty} \geq M\}} \leq I_{\{ ||X_i||_{\infty} \geq l\}},
\]
and it implies that
\[
P(\sup_{M \geq l} I_{\{||X_i||_{\infty} \geq M\}} > \varepsilon) \leq P(I_{\{ ||X_i||_{\infty} \geq l \} \varepsilon})
=P(||X_i||_{\infty} \geq l)\rightarrow 0, \quad \textup{ when } l\rightarrow \infty.
\]
\end{proof}

The following theorem deals with the consistency of our extremality measures in for functional setting.
\begin{The}
$EM_n$ is strongly consistent.
\[
EM_n(x,F_{n,X}) \rightarrow^{a.s.} EM(x,F_X),
\]
where $EM= HEM$ and $EM_n= HEM_n$, or $EM= hEM$ and $EM_n= hEM_n$.
\end{The}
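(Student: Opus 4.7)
The plan is to reduce the claim to the Strong Law of Large Numbers, since $EM_n(x, F_{n,X})$ is, for fixed $x$, simply one minus an empirical mean of i.i.d.\ Bernoulli indicators whose common expectation is $1 - EM(x, F_X)$.

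Fix $x \in C(I)$ and consider the case $EM = HEM$. Set $Z_i = I_{\{x_i(t) \leq x(t),\ t \in I\}}$. Since the $x_i$ are independent copies of $X$, the $Z_i$ are i.i.d.\ Bernoulli random variables, and their mean is
\[
E[Z_1] = P(X(t) \leq x(t),\ t \in I) = 1 - HEM(x, F_X).
\]
The only measurability issue to dispatch is that the event $\{X(t) \leq x(t),\ t \in I\}$ must lie in the underlying $\sigma$-algebra; because $X$ has continuous sample paths and $x \in C(I)$, this event coincides with $\{X(t) \leq x(t),\ t \in I \cap \mathbb{Q}\}$, a countable intersection of measurable events, so no issue arises.

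Kolmogorov's Strong Law of Large Numbers then gives
\[
\frac{1}{n}\sum_{i=1}^{n} Z_i \;\xrightarrow{a.s.}\; E[Z_1] = 1 - HEM(x, F_X),
\]
and consequently
\[
HEM_n(x, F_{n,X}) = 1 - \frac{1}{n}\sum_{i=1}^{n} Z_i \;\xrightarrow{a.s.}\; HEM(x, F_X).
\]
The argument for $hEM$ is identical after replacing $Z_i$ with $I_{\{x_i(t) \geq x(t),\ t \in I\}}$ whose expectation is $1 - hEM(x, F_X)$.

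There is no substantive obstacle; the statement is pointwise (in $x$) a.s.\ convergence, so SLLN suffices and no uniformity, bracketing, or Glivenko--Cantelli-type machinery is required. The only point worth flagging is the measurability remark above, which is what makes the indicator $Z_i$ a bona fide random variable in the functional setting.
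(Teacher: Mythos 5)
Your proof is correct and follows exactly the route the paper takes: the paper's own proof is simply the observation that the result ``is a consequence of the law of large numbers,'' which is what you have spelled out by recognizing $EM_n(x,F_{n,X})$ as one minus an empirical mean of i.i.d.\ Bernoulli indicators with expectation $1-EM(x,F_X)$. Your additional remark on the measurability of $\{X(t)\leq x(t),\ t\in I\}$ via continuity of sample paths is a welcome detail the paper leaves implicit.
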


\begin{proof}
This results is a consequence of the law of large numbers.
\end{proof}

\section{Generalized extreme measures}\label{S5}
The hyperextremality and the hypoextremality are useful tools to measure the `farness' of a curve with respect to a set of functions. If the curves are regular (in the sense of their shape) then these two measures allow us to identify those that are far from the center. That is, they provide us with a tool to detect those curves that are different from the rest in magnitude. However, when the bunch of curves are very irregular, it is possible that these measures give always numbers very close to one, since there are not many curves lying totally below or above them. In such cases it is convenient to use modified definitions of extremality. Here we introduce the generalized versions of hyperextremality and hypoextremality, less restrictive than the previous versions and that are suitable for irregular curves.

\medskip
We define the modified or generalized
versions of both extreme measures considering the Lebesgue's
measure instead of the indicator function in (\ref{eq1}) and (\ref{eq2}).

\begin{Def}
The \textbf{generalized hyperextremality} (MHEM) and the \textbf{generalized hypoextremality} (MhEM) of $x$ with respect to a set of
functions $x_{1}(t),\ldots,x_{n}(t)$ are, respectively,
\begin{equation*}\label{eq8.2.1}
MHEM_{n}(x)=1-\sum_{i=1}^{n}\frac{\lambda({\{G(x_{i})\subset
hg(x)\}})}{n\lambda(I)},
\end{equation*}
and
\begin{equation*}\label{eq8.2.3}
MhEM_{n}(x)=1-\sum_{i=1}^{n}\frac{\lambda({\{G(x_{i})\subset
Hg(x)\}})}{n\lambda(I)},
\end{equation*}
where $\lambda$ stands for the Lebesgue's measure on $\mathbb{R}$.
\end{Def}
Hence, the generalized hyperextremality (hypoextremality) of $x$ is one minus the
``proportion of time'' that the graphs of the functions of the
sample are in the hypograph (hypergraph) of $x$. That is, the proportion of
time that the curves of the sample are above (below) $x$.

\medskip
Now, if $C_x^{\overrightarrow{u}}$ be a convex cone with vertex $x$
obtained by moving the nonnegative orthant and translating the
origin to $x$. Then, the finite dimensional version of the
generalized hyperextremality (hyporextremality) can be also seen as the proportion of
coordinates of $x$ that belongs to $C_x^{u}$
where $u=\frac{1}{\sqrt{2}}(1,1)'$ ($u=\frac{1}{\sqrt{2}}(-1,-1)'$).

\medskip
The population versions of $MHEM_{n}(x)$ and $MhEM_{n}(x)$ are
\begin{equation*}\label{eq8.2.2}
MHEM(x,P)=1-\frac{E(\lambda({\{t \in I: x(t) \leq X(t)\}}))}{\lambda(I)},
\end{equation*}
and
\begin{equation*}\label{eq8.2.4}
MhEM(x,P)=1-\frac{E(\lambda({\{t \in I: x(t) \geq X(t)\}}))}{\lambda(I)}.
\end{equation*}

Following the notation of L\'opez-Pintado and Romo (2011) the above definitions can be expressed in terms of the superior and inferior lengths.

\medskip
Using the hyperextremality (hypoextremality) or the generalized
hyperextremality (hypoextremality) depends on the kind of
functions being analyzed and the objectives to be checked. If the
curves are very irregular, it is convenient to use the modified
versions because it avoids having too
many ties and there might not be a representative `shape'. The
hyperextremality and the hypoextremality are more adequate if the curves
are smooth in terms of shape.

\medskip
Going back to the example shown in Figure~\ref{fig0}, we can compute the generalized
hyperextremality  and the generalized hypoextremality of the dashed curve $x$ just taking into account the length of the intervals in which the curve is below or above any of the other curves. Thus,
\begin{equation*}
MHEM_n(x)=1-\frac{0.6+0.575+0.3}{6*0.6}=0.5903,
\end{equation*}
and
\begin{equation*}
MhEM_n(x)=1-\frac{0.6+0.6+0.28+0.25+0.26}{6*0.6}=0.4472.
\end{equation*}

\section{Extremality measures as an alternative to statistical depth}\label{S6}
One of the direct applications of these measures is to define the frontier from which we can decide whether a function is an outlier or not, since these definitions provide a natural ordering for functions. Note that the concept of statistical depth also provides a ordering for functional data but this is a center-outward ordering. The idea of extremality is similar to the half-graph depth defined in L{\'o}pez-Pintado and Romo (2011) but in order to measure the extremality of a curve this is a more natural approach. Therefore, these measures provide an alternative ordering which is more natural to that induced by depth definitions (center-outward). Besides, these measures are computationally very fast compared to the statistical depth.

\medskip
Another interesting application, that reveals some shortcomings of the use of statistical depth, is described next. Recently, Franco-Pereira, Lillo and Romo (2012) presented a statistical tool to construct confidence bands for the difference of two percentile residual life functions It is based on bootstrap techniques and the use of depth for functions. This methodology, though computationally slow, is useful to have an idea of whether two random variables are ordered with respect to the percentile residual life order. However, this approach is not valid to conduct an hypothesis test of the form $H_0:q_{X}(t)\leq q_{Y}(t)$ for all $t \in I$ versus $H_1: \textup{ there exists }t' \in I \textup{ such that } q_{X}(t')> q_{Y}(t')$. The reason is that, given $\alpha \in (0,1)$, the proposed $(1-\alpha)$-confidence band is given by the hull of the $(1-\alpha)\cdot 100 \%$ deepest curves (more central curves) in the sample. Since our goal is to test that the two curves are equal, we would reject the null hypothesis if the lower limit of the band is above the $x$-axis for some $t \in I$. However, the approach followed to construct the confidence bands does not guarantee that we leave the same probability in both sides outside each band. What we know is that outside the band there is a probability of $\alpha$ but this probability is not $\alpha/2$ above the band and $\alpha/2$ below the band. A new methodology based on the extremality measures do take this into account so it means a natural alternative to the concept of statistical depth in such contexts is possible.

\medskip
\textbf{Ramsay and Silverman (2005)}
\textbf{Canadian Weather} 
In many fields of environmental sciences such as agronomy, ecology, meteorology
or monitoring of contamination and pollution, the observations consist of samples
of random functions.

\medskip
Here we use a well-known meteorological data set in FDA consisting of daily temperature and precipitation measurements recorded
at 35 weather stations of Canada (Ramsay and Dalzell (1991) and Ramsay and Silverman (2005)). These authors use Fourier basis functions for constructing curves
from discrete data. They apply functional principal components and functional
linear models to describe the modes of variability in temperature curves, and for
establishing the influence of temperature on precipitation. We specifically use the
temperature values of this data set to provide an applied context for our proposal.
In particular we analyze information of daily temperature averaged over the years
1960 to 1994 (February 29th combined with February 28th). See Figure~\ref{curvas}.
The data for each station were obtained from Ramsay and Silverman's home page
(http://www.functionaldata.org/).

\medskip
We have considered the average daily temperature for each day of the year. In Figure~\ref{curvasHEM}, we compute the $80\%$ more central curves leaving outside the $10\%$ of the curves with higher hyperextremality and the $10\%$ of the curves with higher hypoextremality. In Figure~\ref{curvasMHEM}, we compute the $80\%$ more central curves leaving outside the $10\%$ of the curves with higher generalized hyperextremality and the $10\%$ of the curves with higher generalized hypoextremality. In this example, since the curves are quite regular in terms of shape, there is no a big different visually. However we can appreciate that the result using both is different. The last one reflects better the intuitive idea of centrality and extremality.

\section{Rank tests for functional data}\label{S7}
The extremality definitions for curves allow us to extend the rank test to functional data. Liu and Singh (1993) generalized to multivariate data the univariate Wilcoxon rank test through the order induced by a multivariate depth and L{\'o}pez-Pintado and Romo (2009) generalized this idea to functional data. Here we present an alternative rank test to this one. The advantages of our proposal is that the extremality measures we present here are computationally faster and that the order induced is more natural than the order induced by the statistical depth for functions since the former provides a center-outward ordering.

\medskip
Following Liu and Singh (1993) approach, let $x_1,...,x_n$ be a sample of curves and let $P(=P_{n})$ be its empirical distribution. We define
\[
R(P,x_i)=\{ \textup{proportion of $x_j$'s from the sample with } EM_n(x_j) \geq EM_n(x_i)\},
\]
where $EM_n$ can be either the sample hyperextremality, the sample hypoextremality or their corresponding generalized versions. Note that $R(P,x_i)$ takes values between 0 and 1. We rank the observations according to the increasing values of $R(P,x_i)$, assigning them an integer from 1 to $n$. If there are curves with the same value of $R$, $R(P,x_{i_1})=R(P,x_{i_2})=...=R(P,x_{i_j})$, with $i_1 < i_2 <...<i_j$, we consider the rank of $x_{i_{k+1}}$ as the rank of $x_{i_{k}}$ plus one. We propose a test based on these ranks to decide if two groups of curves come from the same population.

\medskip
Let $x_1,x_2,...,x_n$ be a sample of curves from population $P_1$ and let $y_1,y_2,...,y_m$ be a sample of curves from population $P_2$. Assume that there is a third reference sample $Z=\{z_1,z_2,...,z_{n_0}\}$ from one of the two populations, for example $P_1$, with $n_0$ greater than $n$ and $m$. Let $P_{0}$ be the corresponding empirical distribution. Calculate $R(P_0,x_i)=\textup{proportion of $z_j$'s with }EM_n(z_j,P_{0}) \geq EM_n(x_i,P_{0})$, and $R(P_{0},y_i)=\textup{proportion of $z_j$'s with } EM_n(z_j,P_{0}) \geq EM_n(y_i,P_0)$. They express the position of each $x_i$ and $y_i$ with respect to $Z$. Order this values $R(P_0,x_i)$ and $R(P_0,y_i)$ from smallest to highest giving them a rank from 1 to $n+m$. If there are ties, we apply the previous criterion. The proposed statistic to test $H_0:P_1=P_2$ is $W=\textup{The sum of the ranks of }R(P_0,y_j)$. The ranks of $R(P_0,y_j)$ behave under $H_0$ as $m$ numbers randomly chosen from \{1,2,...,n+m\}. Hence, the distribution of $W$ is the distribution of $\rho_1+\rho_2+...+\rho_m$ where $\rho_1,\rho_2,...,\rho_m$ is a sample without replacement of \{1,2,...,n+m\} (see Liu and Singh (1993)). The null hypothesis is rejected when $W$ is small, because it indicates that the distributions are not the same.

\medskip
We have applied this test to real data. First we have considered the growth curves for boy and girls (see Ramsay and Silverman, 2005). We have applied rank test to decide if there are no differences between both groups curves. See Figure~\ref{figGROWTH}. The $p$-value with $HEM$ is 0.02414; hence, we reject the null hypothesis at the 0.05 significance level, concluding that there exit significant differences between the growth curves for boys and girls. When we consider the $hEM$, the $MHEM$ and the $MhEM$, instead of the $HEM$, the corresponding $p$-values are 0.04317, 8.287$e^{-06}$ and 1.408$e^{-06}$. Again, we conclude that there are different growth patterns for boys and girls.


\section{Conclusions}\label{S8}
We have studied the notions of hyperextremality and hypoextremality, first introduced in Franco-Pereira, Lillo and Romo (2011). They reflect the ``extremality'' of a curve with respect to a
collection of functions and provide natural orderings
for sample curves. Therefore, these measures happen to be a natural alternative to the statistical depth for functions.

\medskip
We have also designed a new rank test for functional data based on these notions. This functional rank test shows different growth patterns for boys and girls when it is applied to children growth data.

\bigskip

\noindent {\bf\Large Acknowledgements}\\

The research of Alba M. Franco-Pereira is supported by the project MTM2011 of the Spanish Ministerio de Ciencia e Innovaci\'{o}n (FEDER support included).


\begin{figure}
\centering
\includegraphics[height=8cm,width=8cm]{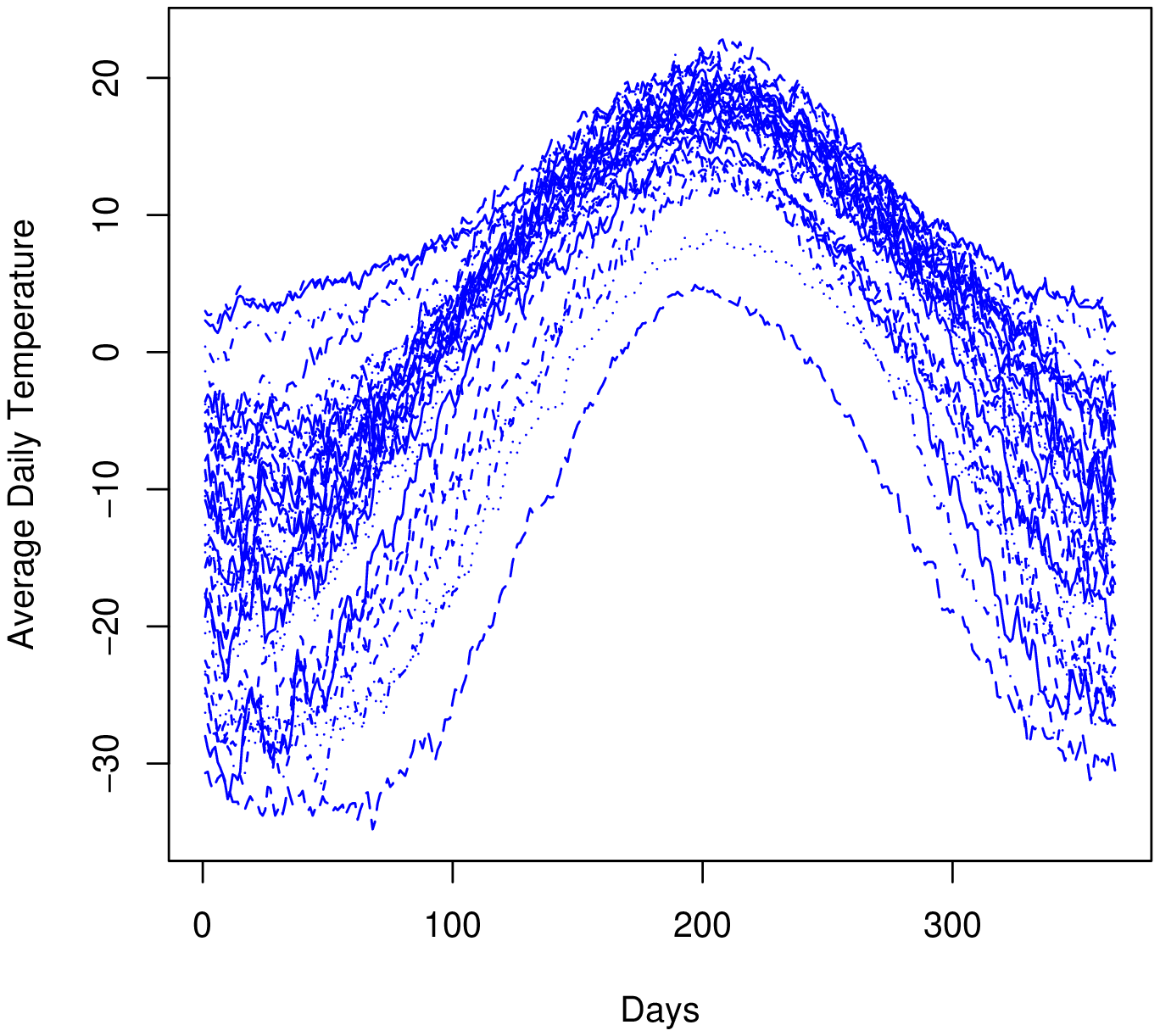}
\caption{Data}
\label{curvas}
\end{figure}

\begin{figure}
\centering
\[
\begin{array}{ccc}
\includegraphics[width=.30\textwidth, height=.315\textwidth]{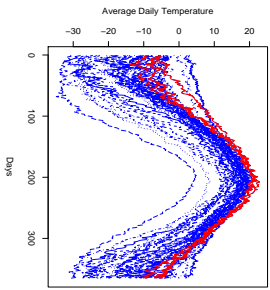} & \includegraphics[width=.30\textwidth, height=.3\textwidth]{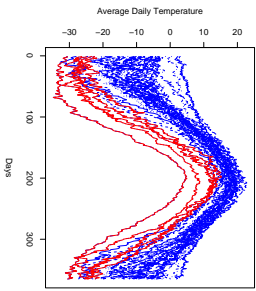} & \includegraphics[width=.30\textwidth, height=.3\textwidth]{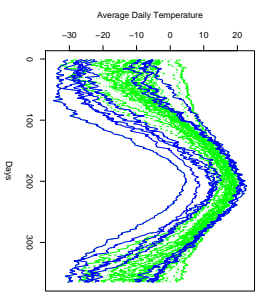}
\end{array}
\]
\caption{a) $10\%$ curves with higher HEM (red), b) $10\%$ curves with higher hEM (red), c) $80\%$ ``less extreme'' curves (green)}
\label{curvasHEM}
\end{figure}

\begin{figure}
\centering
\[
\begin{array}{ccc}
\includegraphics[width=.30\textwidth, height=.315\textwidth]{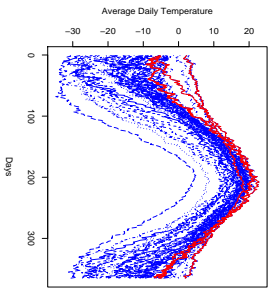} & \includegraphics[width=.30\textwidth, height=.3\textwidth]{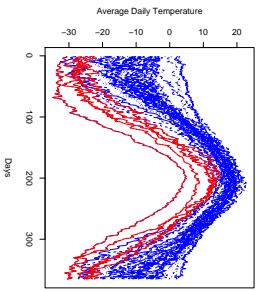} & \includegraphics[width=.30\textwidth, height=.3\textwidth]{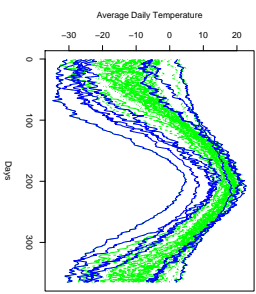}
\end{array}
\]
\caption{a) $10\%$ curves with higher MHEM (red) b) $10\%$ curves with higher MhEM (red), c) $80\%$ ``less generalized extreme'' curves (green)}
\label{curvasMHEM}
\end{figure}
%
\begin{figure}
\centering
\includegraphics[height=8cm,width=8cm]{}
\caption{Three groups of curves}
\label{figGROWTH}
\end{figure}
\end{document}